\definecolor{commentgreen}{RGB}{2,112,10}
\definecolor{eminence}{RGB}{108,48,130}
\definecolor{weborange}{RGB}{200,130,0}
\definecolor{frenchplum}{RGB}{129,20,83}
\newcommand{\prettylstciao}[0]{
\lstset{language=Prolog,
        frame=ltrb,
        rulecolor=\color{blue},
        tabsize=4,
        showstringspaces=false,
        breaklines=true,breakatwhitespace=true,
        showlines=true,
        showspaces=false,showtabs=false,
        commentstyle=\color{gray},
    keywordstyle=\color{eminence},
    stringstyle=\color{red},
    basicstyle=\scriptsize\ttfamily, 
    keywordstyle=\color{weborange},
    emphstyle={\color{blue}},
    emph={pred,prop,trust,check,checked,true,rsize,cardinality,not_fails,module,exp,cost,costb,
      steps_ub,steps_lb,size_ub,size_lb,covered,mut_exclusive,cost,use_module,int,calls,success,head_cost,literal_cost,
      mshare,trust_default,int,atm,term,comp,
      is_det,num,var,list,ground,length,terminates,steps_o,resource,entry,impl_defined},
    otherkeywords={>,<,>=,=<,.,;,-,!,=,~,*,\&,+,:-,[,],|,->,:,:=},
    morekeywords= {>,<,>=,=<,.,;,-,!,=,~,*,\&,+,:-,[,],|,->,:,:=},
    escapechar=@,
    escapeinside=~~,
      }}
\newenvironment{example-box}{\begin{example}\rm}{$\Box$\end{example}}
\newcommand{\powerset}[1]{\wp{#1}}
\newcommand{\secpre}{\vspace*{-2mm}}
\newcommand{\secpost}{}
\newcommand{\subsecpre}{\vspace*{-3mm}}
\newcommand{\subsecpost}{\vspace*{-1mm}}
\begin{document}

\title{\vspace*{-3mm}Computing Abstract Distances in Logic Programs}

\author{
       Ignacio Casso\inst{1,2} \and
       José F. Morales\inst{1} \and \\ 
       Pedro Lopez-Garcia\inst{1,3} \and
       Manuel V. Hermenegildo\inst{1,2}
       }

\institute{IMDEA Software Institute \and
          T. University of Madrid (UPM) \and
          Spanish Council for Scientific Research (CSIC)
          }

\maketitle

\begin{abstract}

  Abstract interpretation is a well-established technique for
  performing static analyses of logic programs. However, choosing the
  abstract domain, widening, fixpoint, etc.\ that provides the best
  precision-cost trade-off remains an open problem. This is in a good
  part because of the challenges involved in measuring and comparing
  the precision of different analyses.
  We propose a new approach for measuring such precision, based on
  defining distances in abstract domains and extending them to
  distances between whole analyses of a given program, thus allowing
  comparing precision across different analyses.
  We survey and extend existing proposals for distances and metrics in
  lattices or abstract domains, and we propose metrics for some 
  common domains used in logic program analysis, as well as extensions
  of those metrics to the space of whole program analysis. We
  implement those metrics within the CiaoPP framework and apply them
  to measure the precision of different analyses over
  both benchmarks and a realistic program.

\end{abstract}

\begin{keywords} Abstract interpretation, static analysis, logic
  programming, metrics, distances, complete lattices, program
  semantics.
\end{keywords}


\secpre
\section{Introduction}
\secpost

Many practical static analyzers for (Constraint) Logic Programming
((C)LP) are based on the theory of Abstract
Interpretation~\cite{Cousot77-short}.
The basic idea behind this technique is to interpret (i.e., execute)
the program over a special abstract domain
to obtain some abstract semantics
of the program,
which will over-approximate every possible execution
in the standard (concrete)
domain.
This makes it possible to reason safely (but perhaps imprecisely)
about the properties that hold for all such executions.
As mentioned before, abstract interpretation has proved practical and
effective for building static analysis tools, and in particular in the
context of (C)LP~\cite{abs-int-naclp89-short,pracai-jlp-short,van-roy-computer,effofai,deb-acm89,bruy91,Ma-So-Jo,anconsall-acm-shorter,clpropt-short}.

Recently, these techniques have also been applied successfully to the
analysis and verification of other programming paradigms by using
(C)LP (Horn Clauses) as the intermediate representation for
different compilation levels, ranging from source to bytecode or ISA
~\cite{jvm-cost-esop-short-plus,BandaG08-short,resources-bytecode09-shorter,DBLP:conf/tacas/GrebenshchikovGLPR12-short,isa-energy-lopstr13-final-shortest,DBLP:conf/tacas/AngelisFPP14-shorter,DBLP:conf/cav/GurfinkelKKN15-shorter,DBLP:conf/birthday/BjornerGMR15-shorter,DBLP:conf/pldi/MadsenYL16-short,kafle-cav2016-short}.

When designing or choosing an abstract interpretation-based analysis,
a crucial issue is the trade-off between cost and precision, and thus
research in new abstract domains, widenings, fixpoints, etc., often
requires studying this
trade-off.
However, while measuring analysis cost is typically relatively
straightforward, having effective precision measures is much more
involved.
There have been a few proposals 
for this purpose, including, e.g., probabilistic abstract
interpretation~\cite{pierro-01} and some measures in numeric
domains~\cite{logozzo-09,sotin-10}
\footnote{Some of these attempts (and others) are further explained in
  the related work section (Section \ref{related-work}).}
, but they have limitations and in practice most studies
come up with ad-hoc measures for measuring 
precision.
Furthermore, there have been no proposals for such measures in (C)LP
domains.

We propose a new approach for measuring the precision of abstract
interpretation-based analyses in (C)LP, based on defining
\emph{distances in abstract domains} and extending them to
\emph{distances between whole analyses of a given program}, which
allow comparison of precision across different analyses.  Our
contributions can be summarized as follows:
We survey and extend existing proposals for distances in lattices and
abstract domains (Sec.~\ref{domain-distances}).
We then build on this theory and ideas to propose distances for common
domains used in (C)LP analysis~(Sec. \ref{clp-distances}).
We also propose a principled methodology for comparing quantitatively
the precision of different abstract interpretation-based analyses of a
whole program (Sec.~\ref{analysis-distances}).
This methodology is parametric on the distance in the
underlying abstract domain and only relies in a unified representation
of those analysis results as AND-OR trees. 
Thus, it can be used
to measure the precision of new fixpoints, widenings, etc. within a
given abstract interpretation framework, not requiring knowledge of
its implementation.
To the extent of our knowledge, all previous principled attempts at
measuring the precision of different abstract interpretations have
addressed the precision of analysis operators, rather than providing a
general methodology for comparing the results obtained for particular
programs.
Finally, we also provide experimental evidence about the
appropriateness of the proposed distances (Sec.~\ref{experiments}).
\secpre
\section{Background and Notation}
\secpost

\paragraph{Lattices:}  A \emph{partial order} on a set \(X\)
is a binary relation \(\sqsubseteq\) that is reflexive, transitive,
and antisymmetric. The \emph{greatest lower bound} or \emph{meet} of
$a$ and $b$, denoted by $a \sqcap b$, is the greatest element in \(X\)
that is still lower than both of them ($a \sqcap b \sqsubseteq a, ~a
\sqcap b \sqsubseteq b, ~(c \sqsubseteq a\ \land c \sqsubseteq b
\implies c \sqsubseteq a \sqcap b))$. If it exists, it is unique. The
\emph{least upper bound} or \emph{join} of $a$ and $b$, denoted by $a
\sqcup b$, is the smallest element in $X$ that is still greater than
both of them ($a \sqsubseteq a \sqcup b, ~ b \sqsubseteq a \sqcup b, ~
(a \sqsubseteq c \land b \sqsubseteq c \implies a \sqcup b \sqsubseteq
c)$). If it exists, it is unique.  A partially ordered set (poset) is
a couple $(X, \sqsubseteq)$ such that the first element $X$ is a set
and the second one is a partial order relation on $X$. A
\emph{lattice} is a poset for which any two elements have a meet and a
join. A lattice $L$ is complete if, extending in the natural way the
definition of supremum and infimum to subsets of $L$, every subset $S$
of $L$ has both a supremum $sup(S)$ and an infimum $inf(S)$. The
maximum element of a complete lattice, $\sup(L)$ is called \textit{top} or
$\top$, and the minimum, $\inf(L)$ is called \textit{bottom} or $\bot$.

\vspace*{-3mm}
\paragraph{Galois Connections:} Let \((L_1, \sqsubseteq_1)\) and \((L_2, \sqsubseteq_2)\) be two
posets. Let \(f: L_1 \longrightarrow L_2\) and \(g: L_2
\longrightarrow L_1\) be two applications such that:
\vspace*{-1mm}
\[\forall x \in L_1, y \in L_2: f(x)\,\sqsubseteq_2\, y \iff x\,
\sqsubseteq_1\, g(y)\]\\ [-4mm]
\vspace*{-1mm}
\noindent
Then the quadruple \(\langle L_1,f,L_2,g \rangle\) is a \emph{Galois
  connection}, written $L_1\galois{f}{g} L_2$. If
$f \circ g$ is the identity, then the quadruple is called a
\emph{Galois insertion}.

\vspace*{-2mm}
\paragraph{Abstract Interpretation and Abstract Domains:}
Abstract interpretation~\cite{Cousot77-short} is a well-known static
analysis technique that 
allows computing sound over-approx\-imations of the semantics of
programs. The semantics of a program can be described in terms of the
\textit{concrete domain}, whose values in the case of (C)LP
are typically sets of variable 
substitutions that may occur at runtime. The idea behind abstract
interpretation is to interpret the program over a special abstract
domain, whose values, called \textit{abstract substitutions}, are
finite representations of possibly infinite sets of actual
substitutions in the concrete domain. We will denote the concrete
domain as $D$, and the abstract domain as $D_\alpha$. We will denote the
functions that relate sets of concrete substitutions with abstract
substitutions as the \textit{abstraction} function $\alpha: D
\longrightarrow D_\alpha$ and the \textit{concretization} function
$\gamma: D_\alpha \longrightarrow D$.
The concrete domain is a complete lattice under the set inclusion
order, and that order induces an ordering relation in the abstract
domain herein represented by ``$\sqsubseteq$.'' Under this relation
the abstract domain is usually a complete lattice or cpo
and $(D,\alpha,D_\alpha,\gamma)$ is a Galois insertion. The abstract
domain is of finite height or alternatively it is equipped with a
\textit{widening operator}, which allows for skipping over infinite
ascending chains during analysis to a greater fixpoint, achieving
convergence in exchange for precision.

\vspace*{-3mm}
\paragraph{Metric:} A metric on a set $S$ is a function
$d : S \times S \rightarrow \mathbb{R}$ satisfying:
\vspace*{-2mm}
\begin{itemize}
\item Non-negativity: \hfill $\forall x,y \in S,~ d(x,y) \geq 0$.

\item Identity of indiscernibles: \hfill $\forall x,y \in S,~ d(x,y)=0 \iff x=y$.

\item Symmetry: \hfill
  $\forall x,y \in S,~ d(x,y) = d(y,x)$.

\item Triangle inequality: \hfill
  $\forall x,y,z \in S,~ d(x,z) \leq
  d(x,y)+d(y,z)$.

\end{itemize}

A set $S$ in which a metric is defined is called a metric space.  A
pseudometric is a metric where two elements which are different are
allowed to have distance 0. We call the left implication of the
identity of indiscernibles, weak identity of indiscernibles.
A well-known method to extend a metric
$d: S \times S \longrightarrow \mathbb{R}$ to a metric in
$\powerset(S)$ is using the Hausdorff distance, defined as:
\vspace*{-3mm}
\[\text{d}_{H}(A,B) = \max\left\{ \sup_{a\in A} \inf_{b\in
  B} \text{d}(a,b),\sup_{b\in B} \inf_{a\in A}\text{d}(a,b)\right\}\]

\secpre
\section{Distances in Abstract Domains}
\secpost
\label{domain-distances}

As anticipated in the introduction, our distances between abstract
interpretation-based analyses of a program will be parameterized by
distance in the underlying abstract domain, which we assume to be a
complete lattice. In this section we propose a few such distances for
relevant logic programming abstract domains. But first we review and
extend some of the concepts that
arise when working with lattices or abstract domains as metric spaces.

\subsecpre
\subsection{Distances in lattices and abstract domains}
\subsecpost

When defining a distance in a partially ordered set, it is necessary
to consider the compatibility between the metric and the structure of
the lattice. This relationship will suggest new properties that a
metric in a lattice should satisfy. For example, a distance in a
lattice should be \textit{order-preserving}, that is,
$\forall a,b,c \in D ~ with ~ a \sqsubseteq b \sqsubseteq c, ~ then ~
d(a,b),d(b,c) \leq d(a,c)$. It is also reasonable to expect that it
fulfills what we have called the diamond inequality, that is,
$\forall a,b,c,d \in D ~ with ~ c \sqcap d \sqsubset a \sqcap b, ~ a
\sqcup b \sqsubset c \sqcup d, ~ then ~ d(a,b) \leq d(c,d)$. But more
importantly, this relationship will suggest insights for constructing
such metrics.

One such insight is precisely defining a partial metric
$d_\sqsubseteq$ only between elements which are related in the
lattice, which is arguably easier, and to extend it later to a
distance between arbitrary elements $x,y$, as a function of
$d_\sqsubseteq(x,x \sqcap y),~d_\sqsubseteq(y,x \sqcap
y),~d_\sqsubseteq(x,x \sqcup y), ~d_\sqsubseteq(x,x \sqcup y)$ and
$d_\sqsubseteq(x \sqcap y, x \sqcup y)$. Jan Ramon et
al.~\cite{DBLP:conf/ilp/RamonB98-short} show under which circumstances 
$d_\sqsubseteq(x,x \sqcup y) + d_\sqsubseteq(y, x \sqcup y)$ is a
distance, that is, when $d_\sqsubseteq$ is order-preserving and
fulfills
$d_\sqsubseteq(x,x \sqcup y) + d_\sqsubseteq(y, x \sqcup y) \leq
d_\sqsubseteq(x,x \sqcap y) + d_\sqsubseteq(y, x \sqcap y)$.

In particular, one could define a monotonic size
$size: L \rightarrow \mathbb{R}$ in the lattice and define
$d_\sqsubseteq(a,b)$ as $size(b)-size(a)$. 
Gratzer \cite{general-lattice-theory} shows
that if the size
fullfills $size(x) + size(y) = size(x \sqcap y) + size(x \sqcup y)$,
then $d(x,y)=size(x \sqcup y) - size(x \sqcap y)$ is a metric. De
Raedt~\cite{raedt-09} shows that
$d(x,y) = size(x) + size(y) - 2 \cdot size(x \sqcup y)$ is a metric iff
$size(x)+size(y) \leq size(x \sqcap y) + size(x \sqcup y)$, and an
analogous result with $d(x,y) = size(x) + size(y) - 2 \cdot size(x \sqcup y)$
and $\geq$ instead of $\leq$. Note that the first distance is the
equivalent of the \textit{symmetric difference distance} in finite
sets, with $\sqsubseteq$ instead of $\subseteq$ and $size$ instead of
the cardinal of a set. Similar distances for finite sets, such as the
Jaccard distance, can be translated to lattices in the same way.
Another approach to defining $d_\sqsubseteq$ that follows from the
idea of using the lattice structure, is counting the steps between two
elements (i.e., the number of edges between both elements in the Hasse
diagram of the lattice). This was used by~Logozzo~\cite{logozzo-09}.

When defining a distance not just in any lattice, but in an actual abstract
domain (\textit{abstract distance from now on}), it is also necessary
to consider the relation of the abstract domain with the concrete
domain (i.e., the Galois connection), and how an abstract distance is
interpreted under that 
relation. In that sense, we can observe that a distance $d_{D_\alpha}:
D_\alpha \rightarrow D_\alpha$ in an abstract domain will induce a
distance $d_D^\alpha: D \rightarrow D$ in the concrete one, as
$d_D^\alpha(A,B)=d_{D_\alpha}(\alpha(A),\alpha(B))$, and the other way
around: a distance $d_D : D \rightarrow D$ in the concrete domain
induces an abstract distance $d_{D_\alpha}^\gamma: D_\alpha
\rightarrow D_\alpha$ in the abstract one, as
$d_{D_\alpha}^\gamma(a,b)=d_D(\gamma(a),\gamma(b))$. Thus, an abstract
distance can be interpreted as an abstraction of a distance in the
concrete domain, or as a way to define a distance in it, and it is
clear that it is when interpreted that way that an abstract distance
makes most sense from a program semantics point of view.

It is straightforward to see
(and we show in the appendix)
that these induced distances inherit most metric and order-related
properties. In particular, if a distance $d_D$ in the concrete domain
is a metric, its abstraction $d_{D_\alpha}$ is a pseudo-metric in the
abstract domain, and a full metric if the Galois connection between
$D$ and $D_\alpha$ is a Galois insertion. This allows us to define
distances $d_\alpha$ in the abstract domain from distances $d$ the
concrete domain, as $d_\alpha(a,b)=d(\gamma(a),\gamma(b))$.  This
approach might seem of little applicability, due to the fact that
concretizations will most likely be infinite and we still need metrics
in the concrete domain. But in the case of logic programs, such
metrics for Herbrand terms already exist (e.g.,
\cite{hutch-ecml-97,cheng-hwei-ilp-97,DBLP:conf/ilp/RamonB98-short}),
and in fact we show later a distance for the \textit{regular types}
domain that can be interpreted as an extension of this kind, of the
distance proposed by Nienhuys-Cheng~\cite{cheng-hwei-ilp-97} for sets
of terms.

Finally, we note that a metric in the Cartesian product of lattices
can be easily derived from existing distances in each lattice, for
example as the 2-norm or any other norm of the vector of distances
component to component. This is relevant because many abstract
domains, such as those that are combinations of two different abstract
domains, or non-relational domains which provide an abstract value
from a lattice for each variable in the substitution, are of such
form. However, although this is a well-known result, it is not clear
whether the resulting distance will fulfill other lattice-related
properties if the distances for each component do. It is straightforward to see that 
that is the case for the \textit{order-preserving} property, but not
for the \textit{diamond inequality}, due to the fact that for abstract
domains, all elements of the lattice $(a_1, \ldots, a_n)$ for which
$\exists i ~ s.t. ~ a_i=\bot$ are identified as the bottom element of
the cartesian product lattice, since their concretization is
$\emptyset$.

\subsecpre
\subsection{Distances in Logic Programming Domains}
\subsecpost
\label{clp-distances}

We now
propose some
distances for two well-known abstract domains used in (C)LP,
following the considerations presented in the previous section.

\vspace*{-2mm}
\paragraph{Sharing domain:}

The \texttt{sharing} domain~\cite{jacobs89-short,abs-int-naclp89-short} is a
well-known domain for analyzing the sharing (aliasing) relationships
between variables and grounding in logic programs. It is defined as
$\wp(\wp(Pvar))$, that is, an abstract substitution for a clause is
defined to be {\em a set of sets of program variables} in that clause,
where each set indicates that the terms to which those variables are
instantiated at runtime might share a free variable. More formally, we
define $Occ(\theta,U) = \{X|X \in dom(\theta), U \in vars(X\theta)\}$,
the set of all program variables $X \in Pvar$ in the clause such that
the variable $U \in Uvar$ appears in $X\theta$. We define the
abstraction of a substitution $\theta$ as ${\cal A}_{sharing}(\theta)
~ = ~ \{Occ(\theta,U) \;|\; U \in Uvar\}$, and extend it to sets of
substitutions. The order induced by this abstraction in
$\wp(\wp(Pvar))$
is the set inclusion, the join, the set union, and the meet, the set
intersection. As an example, a program variable that does not appear
in any set is guaranteed to be ground, two variables that never appear
in the same set are guaranteed to not share, or
$\top=\wp(Pvar)$.  The complete definition can be found
in~\cite{jacobs89-short,abs-int-naclp89-short}).

Following the approach of previous section, we define this monotone
size in the domain: $size(a) = |a|+1, size(\bot)=0$. It is
straightforward to check that
$\forall a,b \in Sh, ~ size(a) + size(b) = size(a \sqcap b) + size(a
\sqcup b)$. Therefore the following distance is a metric and
order-preserving:
\vspace*{-3mm}  
\[d_{share}(Sh_1,Sh_2) = size(Sh_1 \cup Sh_2) - size(Sh_1 \cap Sh_2) =
  |(Sh_1 \cup Sh_2)| - |size(Sh_1 \cap Sh_2)|\]
\noindent
We would like our distance to be in a normalized range $[0,1]$, and
for that we divide it between $d(\bot,\top)=2^n$, where $n=|V|$
denotes the number of variables in the domain of the
substitutions. This yields the following final distance, which is a
metric by construction: \vspace*{-2mm}
\[d_{share}(Sh_1,Sh_2) = (|(Sh_1 \cup Sh_2)| - |size(Sh_1 \cap Sh_2)|)/2^n\]

\vspace*{-5mm}  
\paragraph{Regular-type domain:}

Another well-known domain for logic programs is the \textit{regular
  types} domain~\cite{Dart-Zobel}, which abstracts the shape or type of the
terms to which variables are assigned on runtime. It associates each
variable with a deterministic context free grammar that describes its
shape, with the possible functors and atoms of the program as terminal
symbols. A more formal definition can be found in~\cite{Dart-Zobel}.
We will write abstract substitutions as tuples
$\langle T_1,\ldots,T_n \rangle$, where
$T_i = (S_i,{\cal T}_i,{\cal F}_i,{\cal R}_i)$ is the grammar that
describes the term associated to the i-th variable in the
substitution.
We propose to use as a basis 
the Hausdorff distance in the concrete domain, using the distance
between terms proposed in \cite{cheng-hwei-ilp-97}, i.e.,

$\\
\begin{array}{l}
d_{term}(f(x_1,\ldots,x_n),g(y_1,\ldots,y_m)) =
 \left\{ \begin{array}{ll}
        if & f/n \neq g/m 
        \;\;\;\; then \;\; 1 \\
        else & p\sum_{i=1}^n{\frac{1}{n}d_{term}(x_i,y_i)}
        \end{array} \right. 
\end{array}
\\
$ 

As the derived abstract version, we propose the following distance
between two types or grammars $S_1,~S_2$, defined recursively and with
a little abuse of notation: \\

$
\begin{array}{l}
d'(S_1,S_2) =
 \left\{ \begin{array}{ll}
        if & \exists ~(S_1 \rightarrow f(T_1,\ldots,T_n)) \in {\cal R}_1 \land \nexists (S_2 \rightarrow f(T'_1,\ldots,T'_n)) \in {\cal R}_2
        \;\;\;\; then \;\; 1 \\
        if & \exists ~(S_2 \rightarrow f(T_1,\ldots,T_n)) \in {\cal R}_2 \land \nexists (S_1 \rightarrow f(T'_1,\ldots,T'_n)) \in {\cal R}_1
        \;\;\;\; then \;\; 1 \\
        else & max\{p\sum_{i=1}^n{\frac{1}{n}d'(T_i,T'_i)} ~|~ (S_1 \rightarrow f(T_1,\ldots,T_n)) \in {\cal R}_1 \land \\
           & ~~~~~~~~~~~~~~~~~~~~~~~~~~~~~~~~~~~~~~~~~~~~~~~~~~~~~(S_2 \rightarrow f(T'_1,\ldots,T'_n)) \in {\cal R}_2\}
        \end{array} \right. 
\end{array}
\\
$

We also extend this distance between types 
to distance between substitutions in the abstract domain as follows: \\ 
\centerline{$d(\langle T_1,\ldots,T_n \rangle, \langle T'_1,\ldots,T'_n \rangle) =
\sqrt{d'(T_1,T'_1)^2 + \ldots + d'(T_n,T'_n)^2}$}

Since $d'$ is the abstraction of the Hausdorff distance with
$d_{term}$, which it is proved to be a metric in
\cite{cheng-hwei-ilp-97}, $d'$ is a metric too, as seen in the
previous section. Therefore $d$ is also a metric, since it is its
extension to the cartesian product.

\vspace*{-2mm}
\secpre
\section{Distances between analyses}
\secpost
\label{analysis-distances}

We now attempt to extend a distance in an abstract domain to distances
between results of different abstract interpretation-based analyses of
the same program over that domain. In the following we will assume
(following most ``top-down'' analyzers for (C)LP
programs~\cite{abs-int-naclp89-short,bruy91,anconsall-acm-shorter,clpropt-short})
that the result of an analysis for a given entry (i.e., an initial
predicate \textit{P}, and an initial call pattern or abstract query
$\lambda_c$), is an AND-OR tree,
with root the OR-node $\langle P, \lambda_c, \lambda_s \rangle_\lor$,
where $\lambda_s$ is the abstract substitution computed by the
analysis for that predicate given that initial call pattern.
An AND-OR tree alternates AND-nodes, which correspond to clauses in
the program, and OR-nodes, which correspond to literals in those
clauses. An OR-node is a triplet
$\langle L, \lambda_c, \lambda_s \rangle_\lor$, with \textit{L} a call
to a predicate \textit{P} and $\lambda_c, \lambda_s$ the abstract call
and success substitutions for that goal. It has one AND-node
$\langle C_j, \beta_{entry}^j, \beta_{exit}^j \rangle_\land$ as child
for each clause $C_j$ in the definition of \textit{P}, where
$\beta_{entry}^j=\lambda_c ~ \forall j$ and
$\lambda_s = \bigsqcup \beta_{exit}^j$. An AND-node is a triplet
$\langle C, \beta_{entry}, \beta_{exit} \rangle_\land$, with
\textit{C} a clause $ Head :- L_1,...,L_n$ and with
$\beta_{entry}, \beta_{exit}$ the abstract entry and exit
substitutions for that clause. It has an OR-node
$\langle L_i, \lambda_c^i, \lambda_s^i \rangle_\lor$ for each literal
$L_i$ in the clause, where
$\beta_{entry}=\lambda_c^1, ~ \lambda_s^i=\lambda_c^{i+1}, ~
\lambda_s^n=\beta_{exit}$. This tree is the abstract counterpart of
the resolution trees that represent concrete top-down executions, and
represents a possibly infinite set of those resolution trees at
once. The tree will most likely be infinite, but can be represented as
a finite cyclic tree. We denote the children of a node $T$ as $ch(T)$.

\begin{example-box}%
  Let us consider as an example the simple quick-sort program (using
  difference lists) in Fig.%
  ~\ref{fig:quicksort-abstree}, 
  which uses an \textit{entry} assertion to specify the initial
  abstract query of the analysis~\cite{assert-lang-disciplbook}.
If we analyze it with a simple \textit{groundness} domain (with just
two values \texttt{g} and \texttt{ng}, plus $\top$ and $\bot$),
the result can be represented with the graph shown in
Fig.~\ref{fig:quicksort-abstree}.
\begin{figure}[t]
\fbox{%
\begin{minipage}{0.33\textwidth}
  \includegraphics[scale=0.5,clip,trim=39 39 40 38]{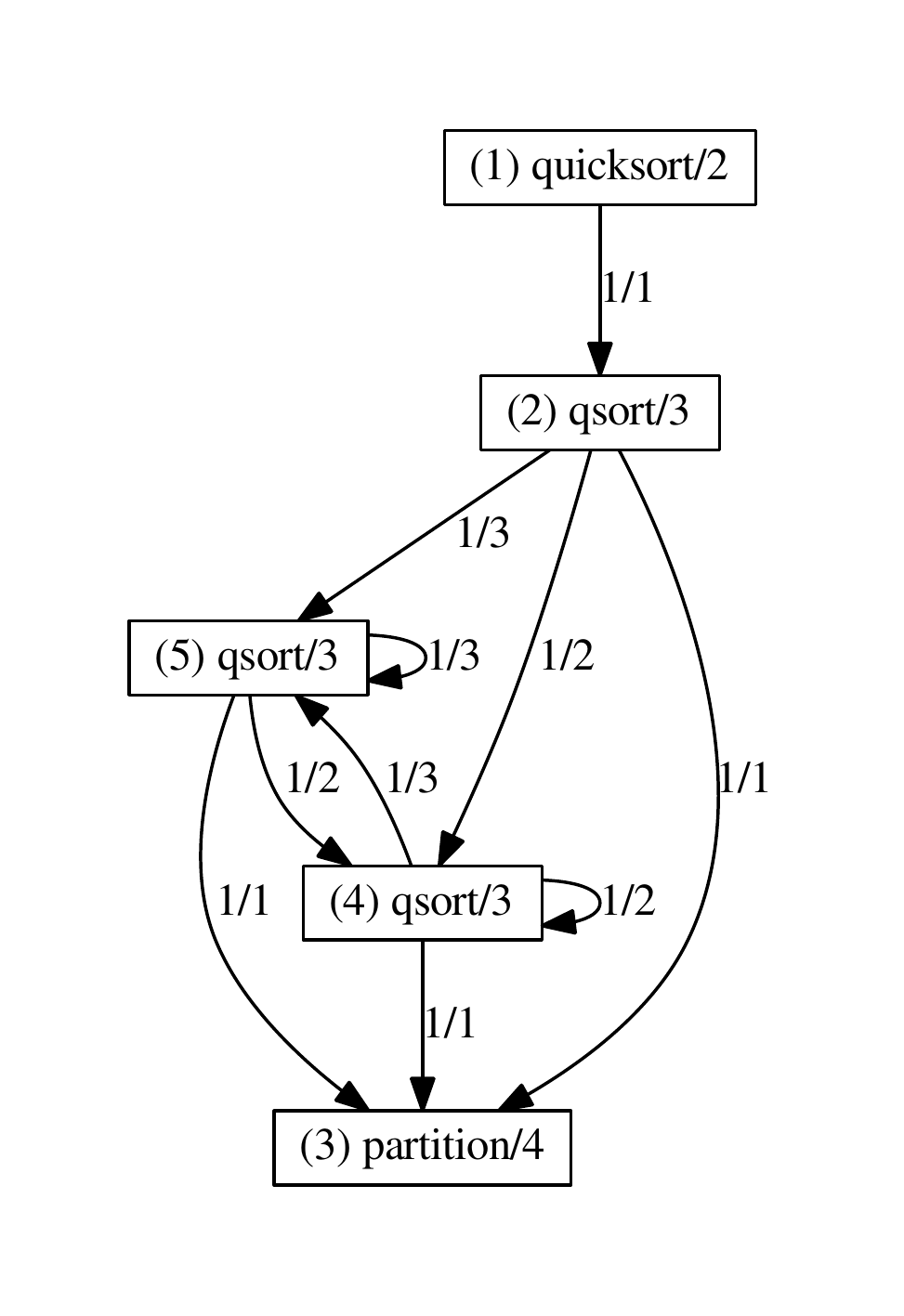}
\end{minipage}}
\ 
\begin{minipage}{0.60\textwidth}
\prettylstciao
\begin{lstlisting}[linewidth=\textwidth,escapechar=@,basicstyle=\scriptsize\ttfamily]
:- module(quicksort,[quicksort/2],[assertions]).
:- use_module(partition,[partition/4]).

:- entry quicksort(Xs,Ys) : (ground(Xs), var(Ys)).

quicksort(Xs,Ys) :-
     qsort(Xs,Ys,[]).

qsort([],Ys,Ys).
qsort([X|Xs],Ys,TailYs) :-
     partition(Xs,X,L,R),
     qsort(R,R2,TailYs),
     qsort(L,Ys,[X|R2]).
\end{lstlisting}
\vfill
\begin{tabular}{ll}
(1) & \protect\scalebox{0.7}{$\langle quicksort(Xs,Ys), ~~ \{Xs/g,Ys/ng\}, ~~ \{Xs/g,Ys/g\} \rangle$}\\
(2) & \protect\scalebox{0.7}{$\langle qsort(Xs,Ys,[]), ~~ \{Xs/g,Ys/ng\}, ~~ \{Xs/g,Ys/g\} \rangle$}\\
(3) & \protect\scalebox{0.7}{$\langle partition(Xs,X,L,R), ~~ \{Xs/g,X/g,L/ng,R/ng\}, ~~ \{Xs/g,X/g,L/g,R/g\} \rangle$}\\
(4) & \protect\scalebox{0.7}{$\langle qsort(Xs,Ys,Zs), ~~ \{Xs/g,Ys/ng,Zs/g\}, ~~ \{Xs/g,Ys/g,Zs/g\} \rangle$}\\
(5) & \protect\scalebox{0.7}{$\langle qsort(Xs,Ys,[Z|Zs]), ~~ \{Xs/g,Ys/ng,Z/g,Zs/g\}, ~~ \{Xs/g,Ys/g,Z/g,Zs/g\} \rangle$}\\
\end{tabular}
\end{minipage}
\caption{Analysis of \texttt{quicksort/2} (using difference lists).}
\label{fig:quicksort-abstree}
\vspace*{-4mm}
\end{figure}
That graph is a finite representation of an infinite abstract and-or
tree. The nodes in the graph correspond to or-nodes
$\langle L, \lambda^c, \lambda^s \rangle$ in the analysis tree, where
the literals $L$, abstract call substitutions $\lambda^c$ and abstract
success substitutions $\lambda^s$ are specified below the graph. The
labels in the edge indicate to which program point each
node corresponds: if one node is connected to its predecessor by an arrow with
label $i/j$, then that node corresponds to the $j$-th literal of the
$i$-th clause of the predicate indicated by the predecessor. The
and-nodes are left implicit.
\end{example-box}

We propose three distances between AND-OR trees $S_1,S_2$ for the same
entry, in increasing order of complexity, and parameterized by a
distance $d_\alpha$ in the underlying abstract domain. We also
discuss which metric properties are inherited by these distances from
$d_\alpha$. Note that a good distance for measuring precision should
fulfill the identity of indiscernibles.

\vspace*{-3mm}
\paragraph{Top distance.} The first consists in considering only the roots of the top trees,
$\langle P, \lambda_c, \lambda_s^1 \rangle_\lor$ and
$\langle P, \lambda_c, \lambda_s^2 \rangle_\lor$, and defining our new
distance as $d(S_1,S_2)=d_\alpha(\lambda_s^1,\lambda_s^2)$. This
distance ignores too much information (e.g., if the entry point is a
predicate \texttt{main/0}, the distance would only distinguish
analyses that detect failure from analysis which do not), so it is not
appropriate for measuring analysis precision, but it is still
interesting as a baseline.
It is straightforward to see that it is a pseudometric if $d_\alpha$
is, but will not fulfill the identity of indiscernibles even if
$d_\alpha$ does.

\vspace*{-3mm}
\paragraph{Flat distance.} The second distance considers all the
information inferred by the analysis for each program point, but
forgetting about its context in the AND-OR tree. In fact, analysis
information is often used this way, i.e., considering only the
substitutions with which a program point can be called or succeeds,
and not which traces lead to those calls (path insensitivity). We
define a distance between program points
$$d_{PP}(S_1,S_2)=\frac{1}{2}(d_\alpha(\bigsqcup_{\lambda \in
  PP_c^1}\lambda,\bigsqcup_{\lambda \in PP_c^2}\lambda) +
d_\alpha(\bigsqcup_{\lambda \in PP_s^1}\lambda,\bigsqcup_{\lambda \in
  PP_s^2}\lambda))$$
where
$PP_c^i=\{ \lambda_c ~|~ \langle PP, \lambda_c, \lambda_s \rangle_\lor
\in S_i \}$,
$PP_s^i=\{ \lambda_s ~|~ \langle PP, \lambda_c, \lambda_s \rangle_\lor
\in S_i \}$. If we denote $P$ as the set of all program points in the
program, that distance can later be extended to a distance between
analyses as $d(S_1,S_2)=\frac{1}{|P|}\sum_{PP \in P}d_{PP}(S_1,S_2)$,
or any other combination of the distances $d_{PP}(S_1,S_2)$ (e.g,
weighted average, $||\cdot||_2$). This distance is more appropriate
for measuring precision than the previous one, but it will still
inherit all metric properties except the identity of indiscernibles.

\vspace*{-3mm}
\paragraph{Tree distance.} For the third distance, we propose the
following recursive definition, which can easily be translated into an
algorithm:

$\\
\begin{array}{l}
d(T_1,T_2)=
 \left\{ \begin{array}{ll}
             \mu\frac{1}{2}(d_\alpha(\lambda_c^1,\lambda_c^2)+d_\alpha(\lambda_s^1,\lambda_s^2))
           + (1-\mu)\frac{1}{|C|}\sum_{(c_1,c_2) \in C}d(c_1,c_2) ~~~~~ if ~ C \neq \emptyset\\
        else ~~~ \frac{1}{2}(d_\alpha(\lambda_c^1,\lambda_c^2)+d_\alpha(\lambda_s^1,\lambda_s^2))
        \end{array} \right. 
\end{array}
\\
$ 

\noindent
  where
$T_1 = \langle P, \lambda_c^1, \lambda_s^1 \rangle, ~ T_2 = \langle
P, \lambda_c^2, \lambda_s^2 \rangle, ~ \mu \in (0,1], ~ C_1 = ch(T_1), ~ C_2 =
ch(T_2)$ and 
$C = \{ (c_1,c_2) ~|~ c_1 \in ch(T_1), c_2 \in ch(T_2), val(c_1)=
\langle X,\_,\_ \rangle, val(c_2) = \langle Y,\_,\_ \rangle, X=Y \}$.
This definition is possible because the two AND-OR trees will
necessarily have the same shape, and therefore we are always comparing
a node with its correspondent node in the other tree.
Also, this distance is well defined, even if the trees, and therefore
the recursions, are infinite, since the expression above always
converges.
Furthermore, the distance to which the expression converges can be
easily computed in finite time. Since the AND-OR trees always have a
finite representation as cyclic trees with $n$ and $m$ nodes
respectively, there are at most $n*m$ different pairs of nodes to
visit during the recursion. Assigning a variable to each pair that is
actually visited, the recursive expression can be expressed as a
linear system of equations. That system has a unique solution since
the original expression had, but also because there is an equation for
each variable and the associated matrix, which is therefore squared,
has strictly dominant diagonal. An example can be found in the
appendix \ref{tree-dist-ex}.

The idea of this distance is that we consider more relevant the
distance between the upper nodes than the distance between the deeper
ones, but we still consider all of them and do not miss any of the
analysis information. As a result, this distance will 
directly inherit the identity of indiscernibles (apart from all other
metric properties) from $d_\alpha$.

\secpre
\section{Experimental Evaluation}
\secpost
\label{experiments}

To evaluate the usefulness of the program analysis distances, we set
up a practical scenario in which we study quantitatively the cost and
precision tradeoff for several abstract domains. In order to do it we
need to overcome two technical problems described below.

\vspace*{-2mm}
\paragraph{Base domain.} Recall that in the distances defined so far,
we assume that we compare two analyses using the same abstract domain.
We relax this requirement by translating each analysis to a common
\emph{base domain}, rich enough to reflect a particular program
property of interest.
An abstract substitution $\lambda$ over a domain $D_\alpha$ is
translated to a new domain $D_{\alpha'}$ as $\lambda'$ =
$\alpha'(\gamma(\lambda))$, and the AND-OR tree is translated by just
translating any abstract substitution occurring in it.
The results still over-approximates concrete executions, but this time
all over the same abstract domain.

\vspace*{-2mm}
\paragraph{Program analysis intersection.} 
Ideally we would compare each analysis with the actual semantics of a
program for a given abstract query, represented also as an AND-OR
tree. However, this semantics is
undecidable
in general,
and we are seeking an automated process. Instead,
we approximated it as the \emph{intersection} of all the computed
analyses.
The intersection between two trees, which can be easily generalized to
$n$ trees, is defined as $inter(T_1,T_2)=T$, with
\vspace*{-4mm}

$$val(T_1) = \langle X, \lambda_c^1, \lambda_s^2 \rangle, ~ val(T_2) =
\langle X, \lambda_c^2, \lambda_s^2 \rangle, ~ val(T) = \langle X,
\lambda_c^1 \sqcap \lambda_c^2, \lambda_s^1 \sqcap \lambda_s^2
\rangle$$

\vspace*{-0.7cm}

$$ch(T) = \{ inter(c_1,c_2) ~|~ c_1 \in ch(T_1),
c_2 \in ch(T_2), val(c_1)= \langle X,\_,\_ \rangle, val(c_2) = \langle
Y,\_,\_ \rangle, X=Y \}$$
That is, a new AND-OR tree with the same shape as those computed by
the analyses, but where each abstract substitution is the greatest
lower bound of the corresponding abstract substitutions in the other
trees.
The resulting tree is the least general AND-OR tree we can obtain that
still over-approximates every concrete execution.

\begin{figure}
  \hspace*{-5mm}
  \includegraphics[width=0.5\textwidth]{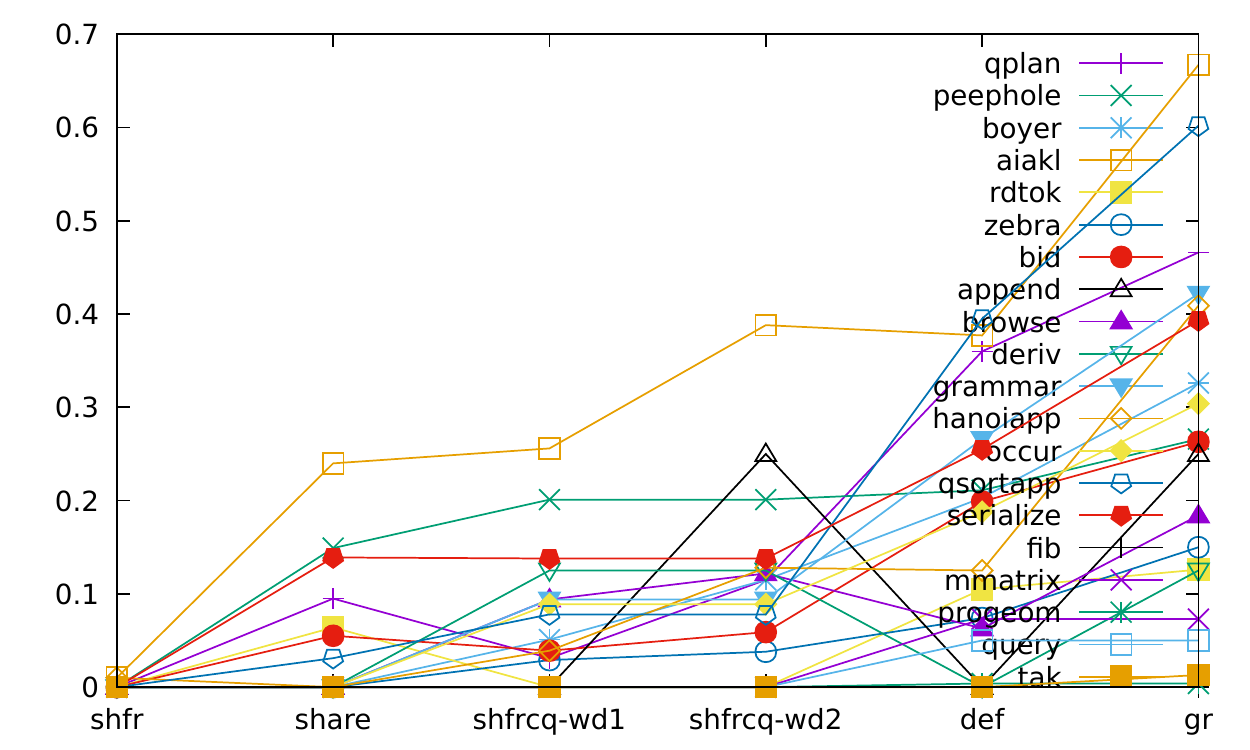}
  \includegraphics[width=0.5\textwidth]{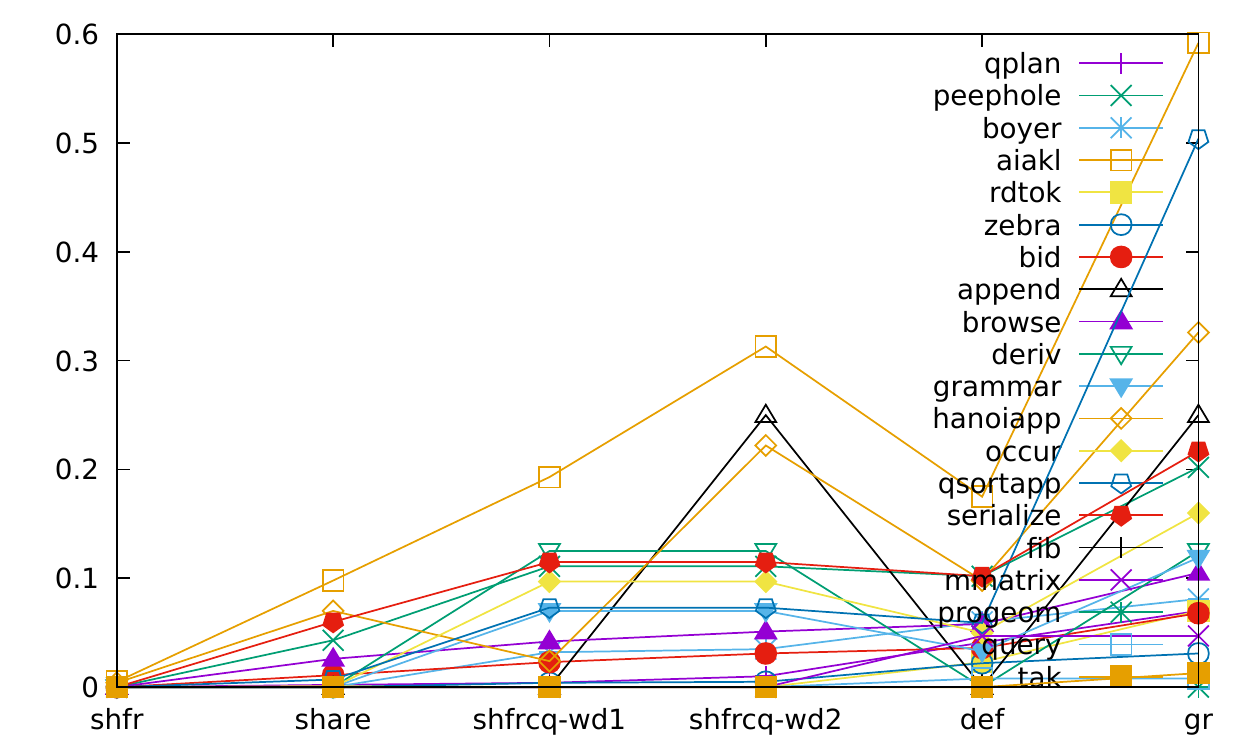}
  \vspace*{-3mm}
  \caption{(a) Precision using flat distance \hfill and \hfill (b) tree
    distance (micro-benchmarks)}
  \label{fig:plot-micro-1}
\vspace*{-2mm}
\end{figure}
\begin{figure}
  \hspace*{-5mm}
  \includegraphics[width=0.5\textwidth]{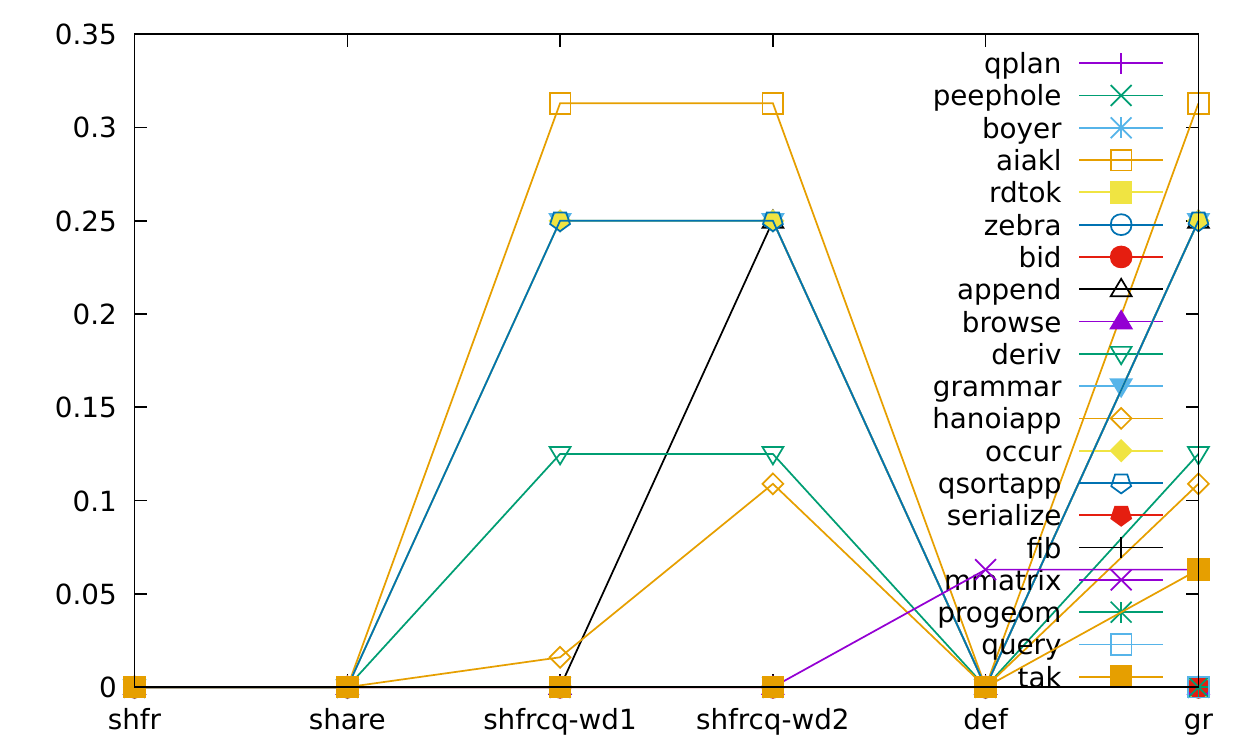}
  \includegraphics[width=0.5\textwidth]{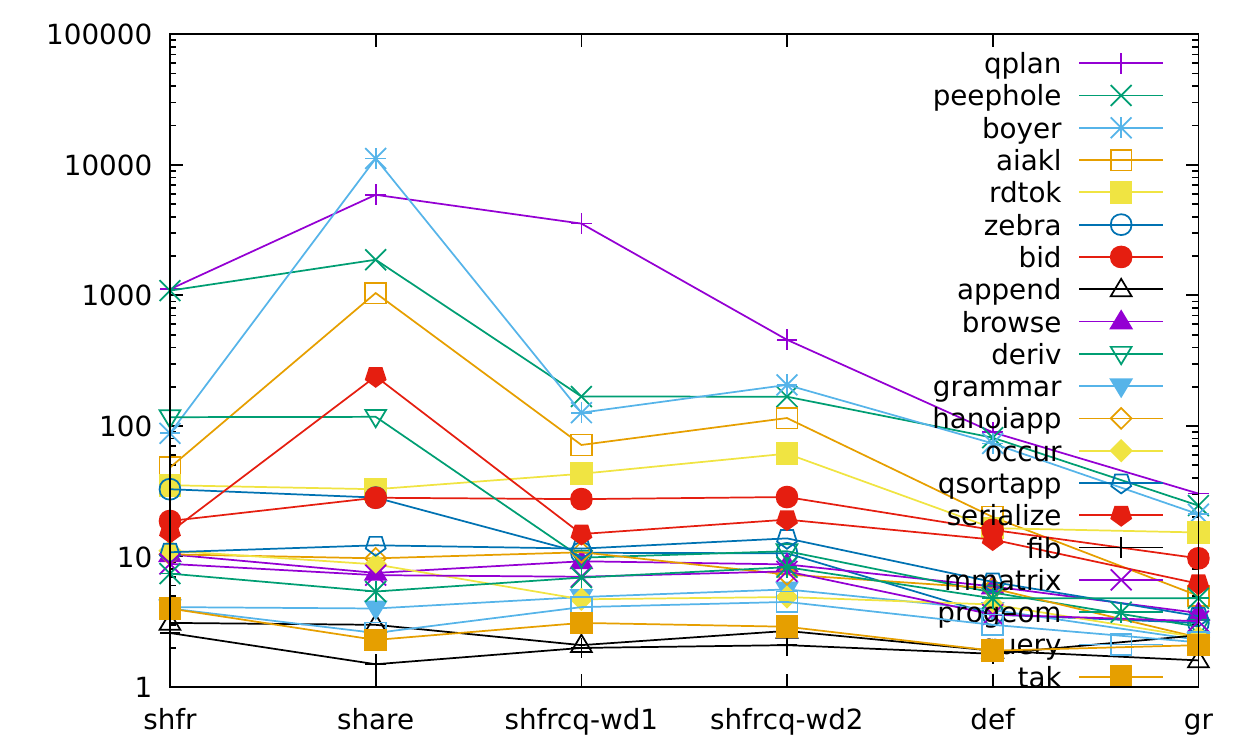}
  \vspace*{-3mm}
  \caption{(a) Precision using top distance  \hfill and \hfill 
    (b) Analysis time (micro-benchmarks)}
  \label{fig:plot-micro-2}
\vspace*{-2mm}
\end{figure}

\begin{figure}
  \hspace*{-5mm}
  \includegraphics[width=0.5\textwidth]{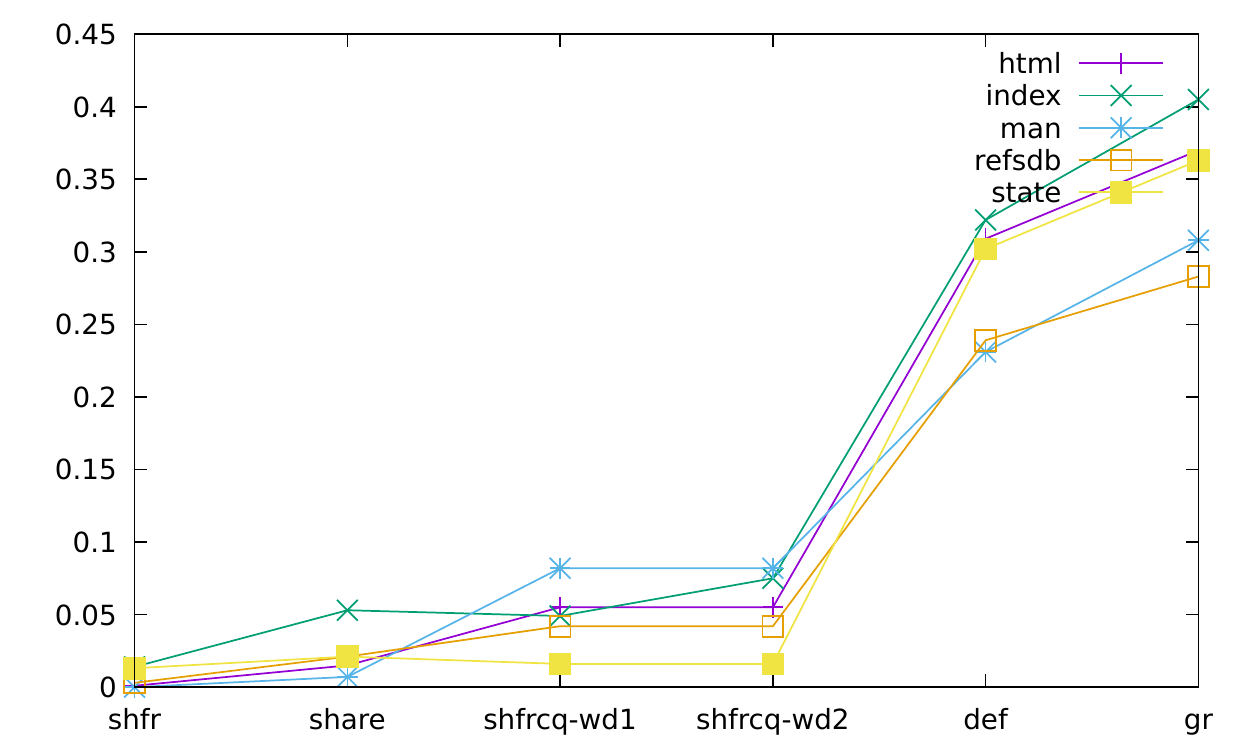}
  \includegraphics[width=0.5\textwidth]{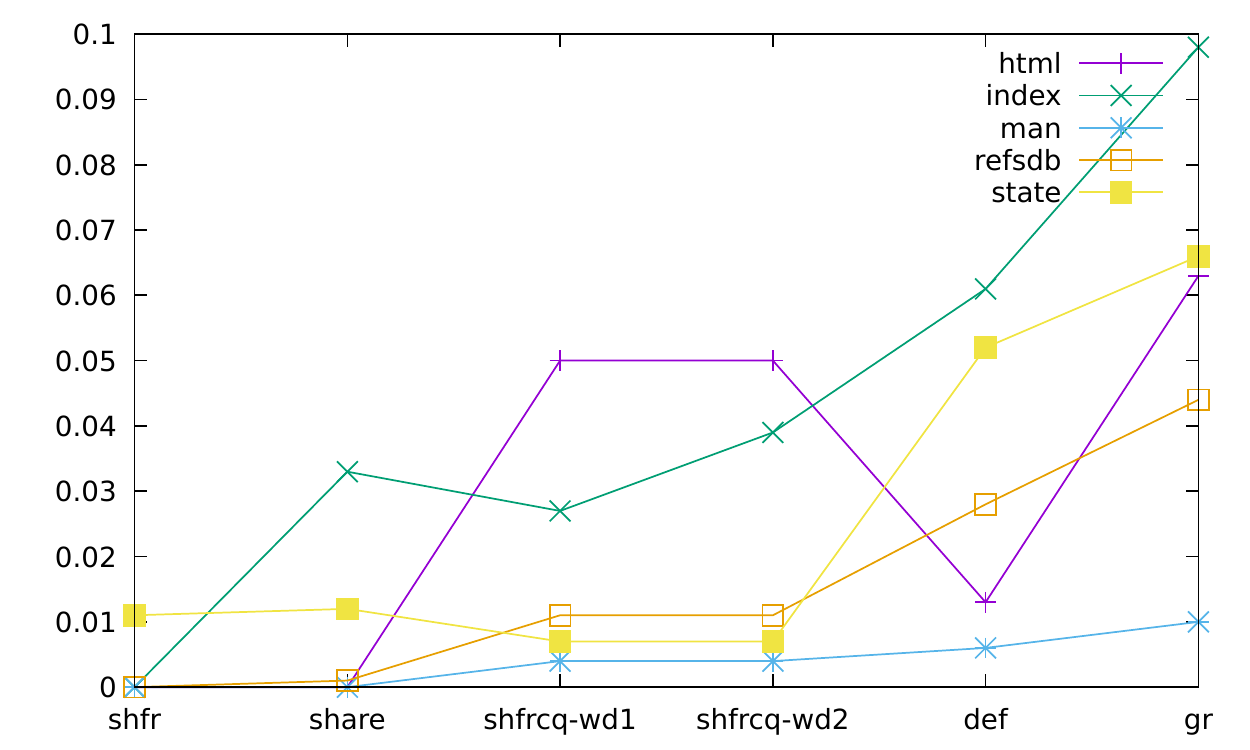}
  \vspace*{-3mm}
  \caption{(a) Precision using flat distance \hfill and \hfill (b) tree
    distance (LPdoc benchmark)}
  \label{fig:plot-lpdoc-1}
\vspace*{-2mm}
\end{figure}
\begin{figure}
  \hspace*{-5mm}
  \includegraphics[width=0.5\textwidth]{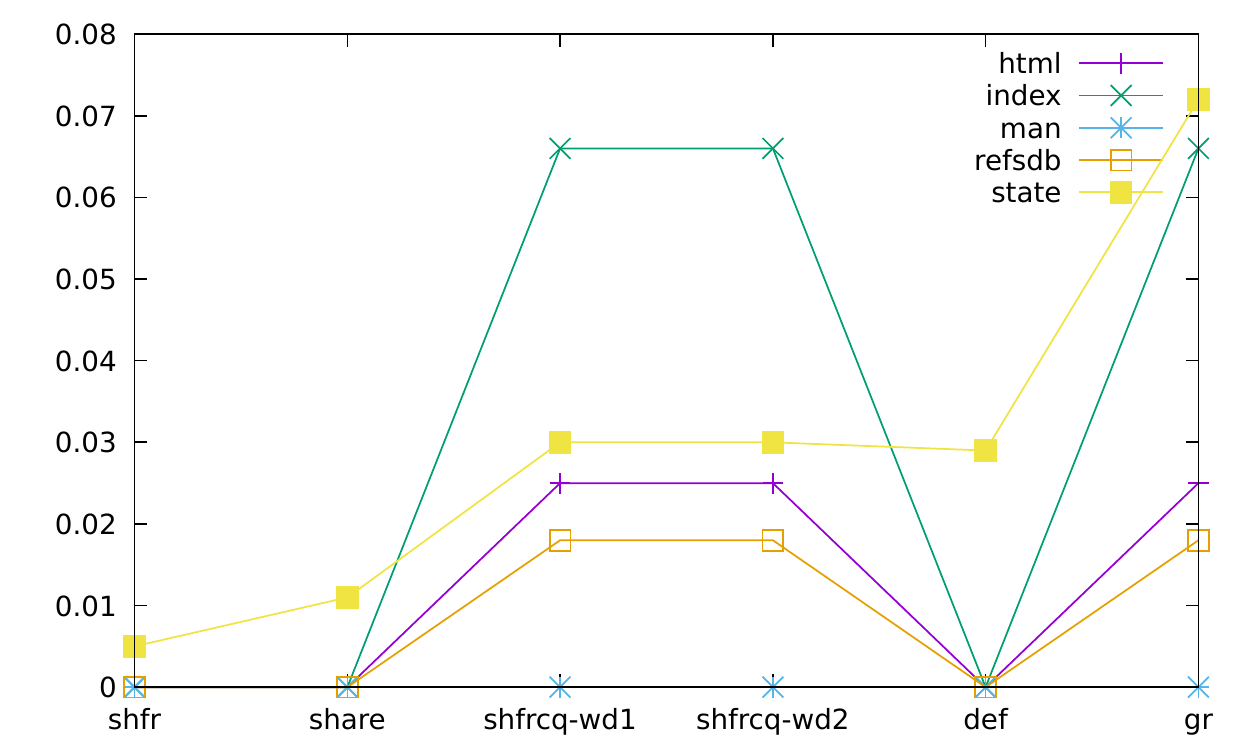}
  \includegraphics[width=0.5\textwidth]{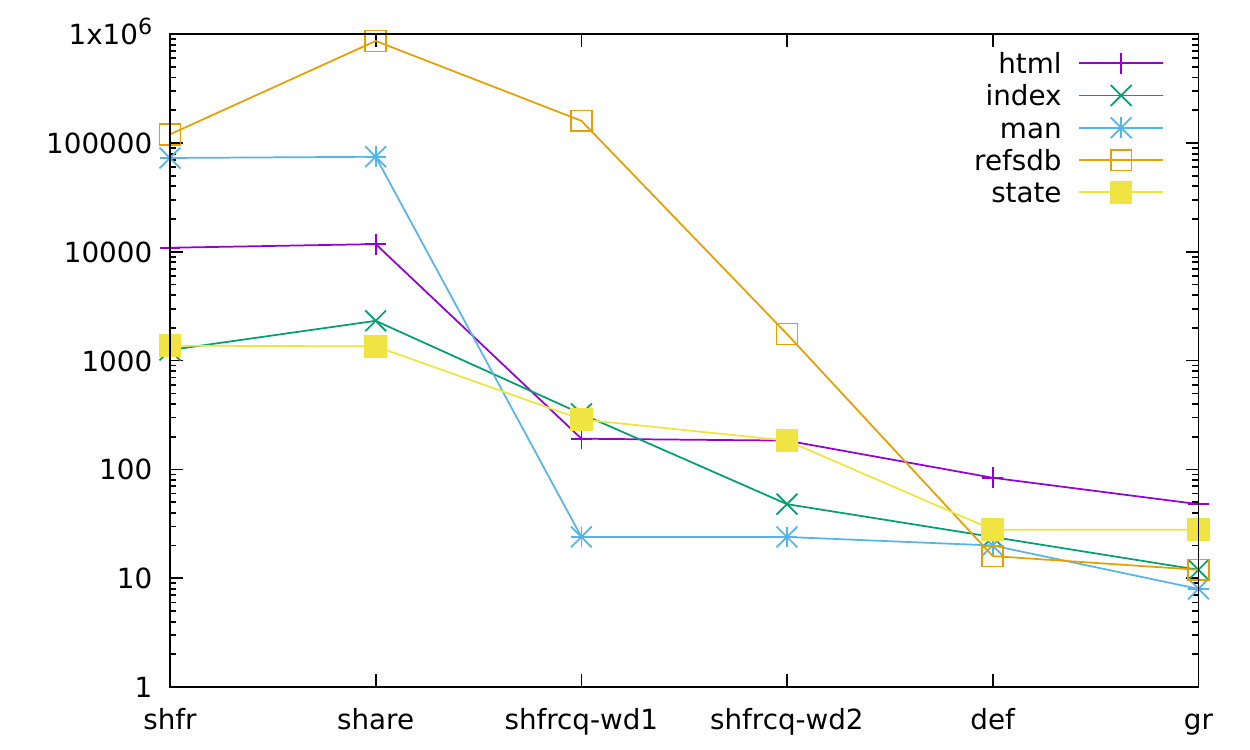}
  \vspace*{-3mm}
  \caption{(a) Precision using top distance  \hfill and \hfill 
    (b) Analysis time (LPdoc benchmarks)}
  \label{fig:plot-lpdoc-2}
\vspace*{-2mm}
\end{figure}

\begin{figure}
  \hspace*{-5mm}
  \includegraphics[width=0.5\textwidth]{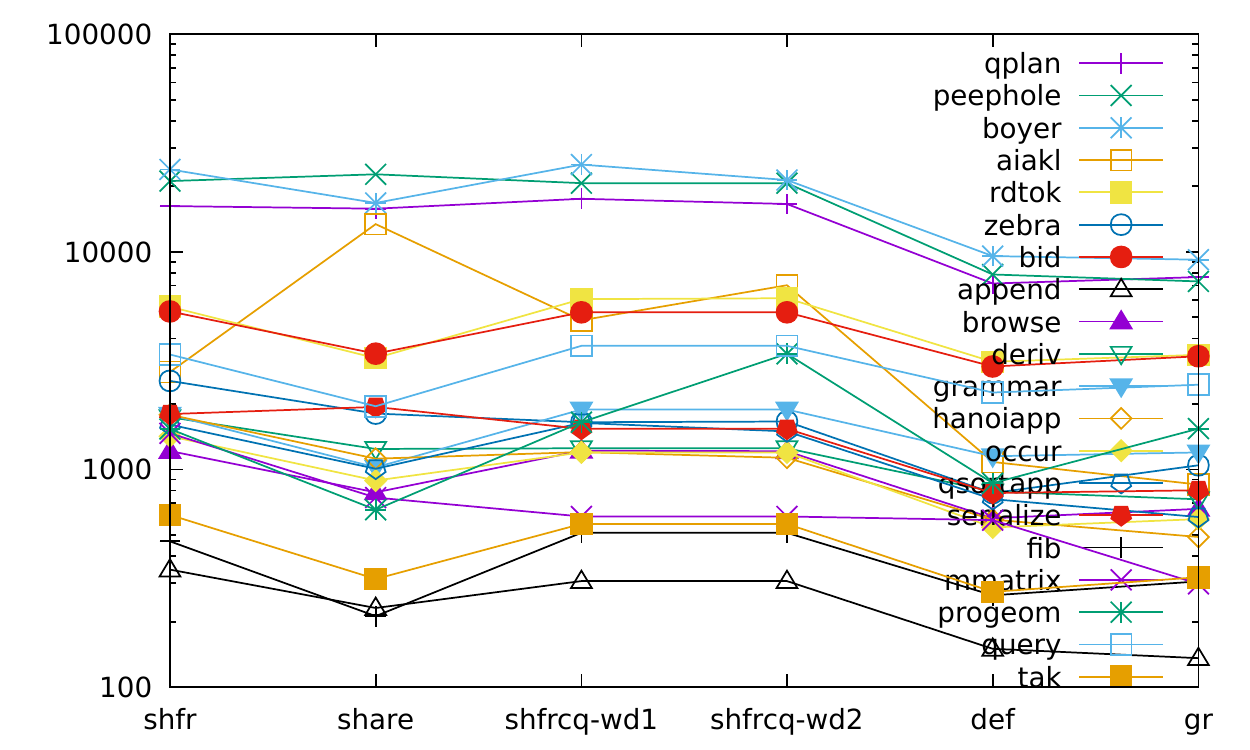}
  \includegraphics[width=0.5\textwidth]{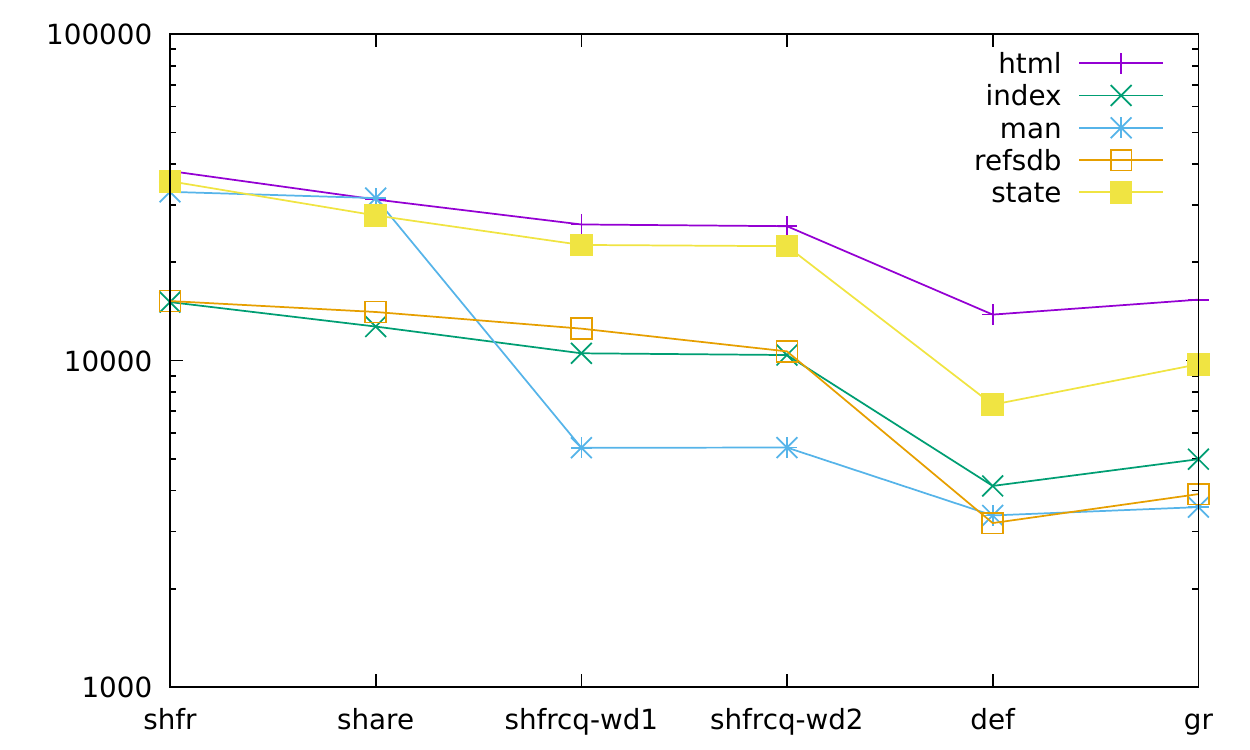}
  \caption{(a) Analysis size (micro-benchmarks) \hfill and \hfill 
    (b) Analysis size (LPdoc benchmark)}
  \label{fig:plot-mem}
  \vspace*{-3mm}
\vspace*{-2mm}
\end{figure}

\vspace*{-2mm}
\paragraph{Case study: variable sharing domains.} 
We have applied the method above on a well known set of
(micro-)benchmarks for CLP analysis, and a number of modules from a real
application (the LPdoc documentation generator).
The programs are analyzed using the CiaoPP framework
\cite{ciaopp-sas03-journal-scp-shortest} and the domains
\textit{shfr}~\cite{freeness-iclp91-short},
\textit{share}~\cite{jacobs89-short,abs-int-naclp89-short},
\textit{def}~\cite{effofree,pos-def94-short}, and
\textit{sharefree\_clique}~\cite{shcliques-padl06-short} with
different widenings.
All these domains express sharing between variables among other
things, and we compare them with respect to the base \textit{share}
domain.
All experiments are run on a Linux machine with Intel Core i5 CPU and
8GB of RAM.

Fig.~\ref{fig:plot-micro-1} and Fig.~\ref{fig:plot-micro-2} show the
results for the micro-benchmarks.
Fig.~\ref{fig:plot-lpdoc-1} and Fig.~\ref{fig:plot-lpdoc-2} show the
same experiment on LPdoc modules.
In both experiments we measure the precision using the flat distance,
tree distance, and top distance.
In general, the results align with our a priori knowledge: that
\textit{shfr} is strictly more precise than all other domains, but
also generally slower; while \textit{gr} is less precise and faster.
As expected, the flat and tree distances show that \textit{share} is
in all cases less precise than \textit{shfr}, and not significantly
cheaper (sometimes even more costly).
The tree distance shows a more pronounced variation of precision when
comparing \textit{share} and widenings. While this can also be
appreciated in the top distance, the top distance fails to show the
difference between \textit{share} and \textit{shfr}. Thus, the tree
distance seems to offer a good balance.
For small programs where analysis requires less than 100ms in
\textit{shfr}, there seems to be no advantage in using less precise
domains.
Also as expected, for large programs widenings provide significant
speedups with moderate precision lose. Small programs do not benefit
in general from widenings.
Finally, the \textit{def} domain shows very good precision w.r.t.\ the
top distance, representing that the domain is good enough to
capture the behavior of predicates at the module interface for the
selected benchmarks.

Fig.~\ref{fig:plot-mem} reflects the size of the AND-OR tree and
experimentally it is correlated with the analysis time. The size
measures of representing abstract substitutions as Prolog terms
(roughly as the number of functor and constant symbols).

\secpre
\section{Related Work}
\secpost
\label{related-work}

\emph{Distances in lattices:}
Lattices and other structures that arise from order relations are
common in many areas of computer science and mathematics, so it is not
surprising that there have been already some attempts at proposing
metrics in them.
E.g.,~\cite{general-lattice-theory} has a dedicated
chapter for metrics in lattices.
\emph{Distances among terms:} Hutch~\cite{hutch-ecml-97},
Nienhuys-Cheng \cite{cheng-hwei-ilp-97} and Jan Ramon
\cite{DBLP:conf/ilp/RamonB98-short} all propose distances in the space of
terms and extend them to distances between sets of terms or
clauses.
Our proposed distance for \textit{regular types} can be interpreted as
the abstraction of the distance proposed by
Nienhuys-Cheng. Furthermore,~\cite{DBLP:conf/ilp/RamonB98-short}
develop some theory of metrics in partial orders, as also does De
Raedt \cite{raedt-09}.
\emph{Distances among abstract elements and operators:}
Logozzo~\cite{logozzo-09} proposes defining metrics in partially
ordered sets and applying them to quantifying the relative loss of
precision induced by numeric abstract domains. Our work is similar in
that we also propose a notion of distance in abstract
domains. However, they restrict their proposed distances to finite or
numeric domains, while we focus instead on logic programming-oriented,
possible infinite, domains. Also, our approach to quantifying the
precision of abstract interpretations follows quite different
ideas.
They use their distances to define a notion of error
induced by an abstract value, and then a notion of error
induced by a finite abstract domain and its abstract operators, with
respect to the concrete domain and concrete operators.
Instead, we work in the context of given programs, and
quantify the difference of precision between the results of different
analyses for those programs, by extending our metrics in abstract
domains to metrics in the space of abstract executions of a program
and comparing those results.
Sotin~\cite{sotin-10} defines measures in $\mathbb{R}^n$ that allow
quantifying the difference in precision between two abstract
values of a numeric domain, by comparing the size of their
concretizations. This is applied to guessing the most appropriate
domain to analyse a program, by under-approximating the potentially
visited states via random testing and comparing the precision with
which different domains would approximate those states.
Di Pierro~\cite{pierro-01} proposes a notion of probabilistic abstract
interpretation, which allows measuring the precision of an abstract
domain and its operators. In their proposed framework, abstract
domains are vector spaces instead of partially ordered sets, and it is
not clear whether every domain, and in particular those used in logic
programming, can be reinterpreted within that framework.
Cortesi~\cite{Cortfilwin92} proposes a formal methodology to compare
qualitatively the precision of two abstract domains with respect to
some of the information they express, that is, to know if one is
strictly more precise that the other according to only part of the
properties they abstract. In our experiments, we compare the precision
of different analyses with respect to some of the information they
express. For some, we know that one is qualitatively more precise than
the other in Cortesi's paper's sense, and that is reflected in our
results.

\secpre
\section{Conclusions}
\secpost

We have proposed a new approach for measuring and comparing precision
across different analyses, based on defining distances in abstract
domains and extending them to distances between whole analyses.
We have surveyed and extended previous proposals for distances and
metrics in lattices or abstract domains, and
proposed metrics for some common (C)LP domains.
We have also proposed extensions of those metrics to the
space of whole program analysis. We have implemented those metrics
and applied them to measuring the
precision of different sharing-related (C)LP analyses on both
benchmarks and a realistic program.
We believe that this application of distances is promising for
debugging the precision of analyses and calibrating heuristics for
combining different domains in portfolio approaches, without prior
knowledge and treating domains as black boxes (except for the
translation to the \emph{base} domain).
In the future we plan to apply the proposed concepts in other
applications beyond measuring precision in analysis, such as studying
how programming methodologies or optimizations affect the analyses,
comparing obfuscated programs, giving approximate results in semantic
code browsing~\cite{deepfind-iclp2016-short}, program synthesis,
software metrics, etc.

\vspace*{4mm}
\noindent
\textbf{Acknowledgements:} Research partially funded by EU FP7
agreement no 318337 \emph{ENTRA}, Spanish MINECO TIN2015-67522-C3-1-R
\emph{TRACES} project, the Madrid M141047003 \emph{N-GREENS} program
and \emph{BLOQUES-CM} project, and the TEZOS Foundation \emph{TEZOS}
project.
  

\begin{small}

\end{small}

\clearpage
\appendix
\section{Theory of Section \ref{domain-distances}}

\subsection{Properties inherited by abstraction or concretization of distances}

\begin{proposition}

  Let us consider an abstract domain $D_\alpha$, that abstracts the
  concrete domain $D$, with abstraction function
  $\alpha : D \rightarrow D_\alpha$ and concretization function
  $\gamma : D_\alpha \rightarrow D$. Both domains are complete
  lattices and $\alpha$ and $\gamma$ form a Galois connection. Then:

(1) If $d_\alpha : D_\alpha \times D_\alpha \rightarrow \mathbb{R}$ is
  a metric in the abstract domain, then $d : D \times D \rightarrow
  \mathbb{R}, ~ d(A,B) = d_\alpha(\alpha(A),\alpha(B))$ is a
  pseudometric in the concrete domain. If $d_\alpha$ is
  \textit{order-preserving}, so it is $d$.

(2) If $d : D \times D \rightarrow \mathbb{R}$ is a metric in the
  concrete domain, then $d_\alpha : D_\alpha \times D_\alpha
  \rightarrow \mathbb{R}, ~ d_\alpha(a,b) = d(\gamma(a),\gamma(b))$ is
  a pseudometric in the abstract domain. If the Galois connection is a
  Galois insertion, then $d$ is a full metric. If $d$ is
  \textit{order-preserving}, so it is $d_\alpha$.

\end{proposition}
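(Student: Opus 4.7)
The plan is to verify the four metric axioms (non-negativity, symmetry, triangle inequality, and weak identity of indiscernibles) by mechanically transferring them from $d_\alpha$ to $d$ in part (1), and from $d$ to $d_\alpha$ in part (2), simply by precomposing with the relevant Galois map. The only subtle points are (a) understanding why we generically get only a \emph{pseudo}-metric, and (b) identifying the extra hypothesis in part (2) that upgrades it to a full metric. Throughout I will use the standard fact, which follows from the Galois condition $f(x) \sqsubseteq_2 y \iff x \sqsubseteq_1 g(y)$, that both $\alpha$ and $\gamma$ are monotone.

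For part (1), I would fix arbitrary $A, B, C \in D$ and write $\bar A = \alpha(A)$, etc. Non-negativity, symmetry, and the triangle inequality for $d$ then reduce to the same axioms for $d_\alpha$ applied to $\bar A, \bar B, \bar C$, since $d(X,Y)$ is defined to be $d_\alpha(\bar X, \bar Y)$. For the weak identity of indiscernibles, $A = B$ implies $\bar A = \bar B$ because $\alpha$ is a function, hence $d(A,B) = d_\alpha(\bar A, \bar A) = 0$. The reason we do \emph{not} obtain the full identity of indiscernibles is that $\alpha$ need not be injective: two distinct concrete elements may have the same best abstraction. For order-preservation, if $A \sqsubseteq B \sqsubseteq C$ then monotonicity of $\alpha$ gives $\bar A \sqsubseteq \bar B \sqsubseteq \bar C$, and the order-preserving property of $d_\alpha$ yields $d_\alpha(\bar A,\bar B), d_\alpha(\bar B, \bar C) \leq d_\alpha(\bar A,\bar C)$, which is exactly what we need for $d$.

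Part (2) is entirely symmetric: substitute $\gamma$ for $\alpha$ everywhere. Non-negativity, symmetry, triangle inequality, and weak identity of indiscernibles follow by the same pointwise reductions, now using that $\gamma$ is a function. The order-preserving case uses monotonicity of $\gamma$ in place of $\alpha$. The one place where something new happens is the upgrade from pseudo-metric to metric under the Galois \emph{insertion} hypothesis: by definition $\alpha \circ \gamma = \mathrm{id}_{D_\alpha}$, so $\gamma$ has a left inverse and is therefore injective. Hence $d_\alpha(a,b) = 0$ implies $d(\gamma(a),\gamma(b)) = 0$, which by the identity of indiscernibles for $d$ forces $\gamma(a) = \gamma(b)$, and injectivity of $\gamma$ then gives $a = b$.

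No step is really hard; the closest thing to an obstacle is being careful about which direction of the identity of indiscernibles is actually being used at each step and spotting that the Galois-insertion assumption is exactly what is needed for the converse direction in part (2). Everything else is a one-line substitution of the defining equation of the induced distance into the corresponding axiom of the original distance.
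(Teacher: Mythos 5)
Your proof is correct and follows essentially the same route as the paper's: each metric axiom is transferred pointwise through $\alpha$ (resp.\ $\gamma$), order-preservation follows from monotonicity of the Galois maps, and the upgrade to a full metric in part (2) comes from injectivity of $\gamma$ under the insertion hypothesis $\alpha \circ \gamma = \mathrm{id}_{D_\alpha}$, exactly as in the appendix. Your added remark on \emph{why} only a pseudometric is obtained in general (non-injectivity of $\alpha$) is a useful clarification but not a deviation.
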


\begin{proof}[\textbf{Proof}]
\label{proof:conc-dist-abs-dist}

\begin{itemize}

\item (1)
  
  \begin{itemize}
      
  \item $d$ is a pseudometric:

    \begin{itemize}

      \setlength\itemsep{-1.5em}

    \item Non-negativity: $d(A,B)=d_\alpha(\alpha(A),\alpha(B)) \geq
      0$, since $d_\alpha$ is non-negative \\
    \item Weak identity of indiscernibles :
      $d(A,A)=d_\alpha(\alpha(A),\alpha(A))=0$, since $d_\alpha$
      fulfills the identity of indiscernibles \\
    \item Symmetry: $d(A,B)=d_\alpha(\alpha(A),\alpha(B)) =
      d_\alpha(\alpha(B),\alpha(A)) = d(B,A)$, since $d_\alpha$ is
      symmetric \\
    \item Triangle inequality: $d(A,C)=d_\alpha(\alpha(A),\alpha(C))
      \leq d_\alpha(\alpha(A),\alpha(B)) +
      d_\alpha(\alpha(B),\alpha(C)) = d(A,B) + d(B,C)$, since
      $d_\alpha$ fulfills the triangle inequality \\
      \end{itemize}
      
  \item $d$ is \textit{order-preserving}:
      
    If $A \subseteq B \subseteq C$, then $\alpha(A) \sqsubseteq
    \alpha(B) \sqsubseteq \alpha(C)$, since $\alpha$ is monotonic. But
    then $d(A,B)=d_\alpha(\alpha(A),\alpha(B)) \leq
    d_\alpha(\alpha(A),\alpha(C)) = d(A,C)$, since $d_\alpha$ is
    order-preserving.
      
  \end{itemize}
    
\item (2)

  \begin{itemize}

  \item $d_\alpha$ is a pseudometric: analogous. Besides, if the
    Galois connection is a Galois insertion, then $\gamma$ is
    injective (otherwise, $\exists ~ a \neq b \in D_\alpha ~ s.t. ~
    \gamma(a)=\gamma(b) \implies \alpha(\gamma(a))=\alpha(\gamma(b))
    \implies a=b$, which is absurd). But then $d_\alpha(a,b)=0
    \implies d(\gamma(a),\gamma(b))=0 \implies \gamma(a)=\gamma(b)
    \implies a=b$, and therefore $d_\alpha$ is a full metric
    
  \item $d_\alpha$ is \textit{order-preserving}: Analogous

  \end{itemize}

\end{itemize}

\end{proof}

\section{Examples for section 4}

\subsection{Example of \textit{program-points} distance}

  The analysis shown in Fig. \ref{fig:quicksort-abstree} has only one
  triple $\langle L, \lambda^c, \lambda^s \rangle$ for each program
  point. Let us consider a different analysis for the same program, in
  which there is no information about the imported predicate
  \texttt{partition/4}, and therefore the analysis needs to assume the
  most general abstract substitution on success for calls to that
  predicate. Fig.
  \ref{fig:quicksort-abstree-2} shows the result of the analysis in
  the same manner as Fig. \ref{fig:quicksort-abstree} does. We observe
  that this time there are program points which have more that one
  triple in the analysis. Let us denote each program point as
  \texttt{P/A/N/M}, where that represents the \texttt{M}-th literal of
  the \texttt{N}-th clause of the predicate \texttt{P/A}. The
  correspondence between program points and analysis nodes is the
  following: \\

\begin{figure}[t]
  \centering
  \includegraphics[scale=0.6]{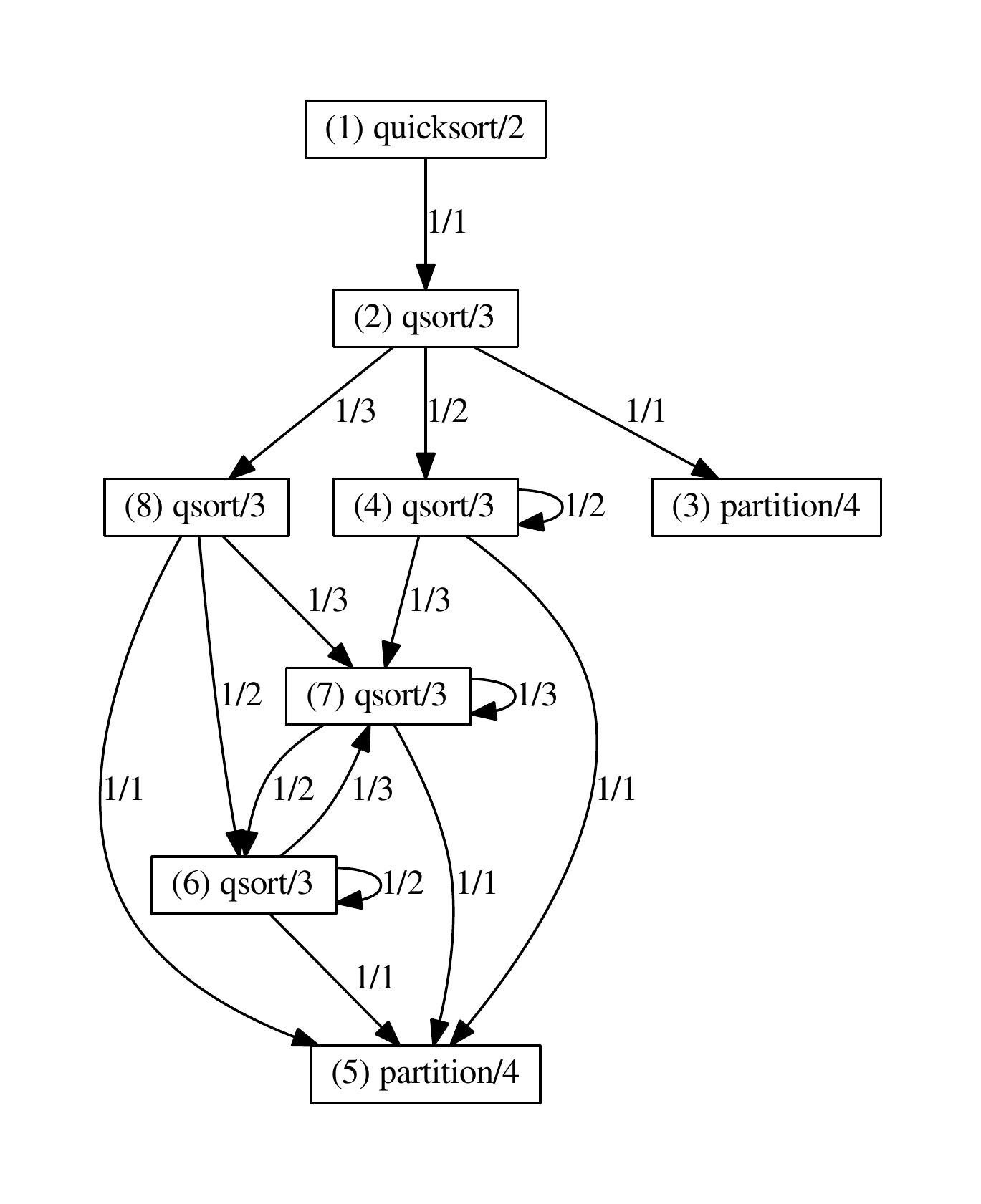}
\begin{tabular}{ll}
(1)  & \protect\scalebox{0.7}{$\langle quicksort(Xs,Ys), ~~ \{Xs/g,Ys/ng\}, ~~ \{Xs/g,Ys/any\} \rangle$}\\
(2)  & \protect\scalebox{0.7}{$\langle qsort(Xs,Ys,[]), ~~ \{Xs/g,Ys/ng\}, ~~ \{Xs/g,Ys/any\} \rangle$}\\
(3)  & \protect\scalebox{0.7}{$\langle partition(Xs,X,L,R), ~~ \{Xs/g,X/g,L/ng,R/ng\}), ~~ \{Xs/g,X/g,L/any,R/any\}) \rangle$}\\
(4)  & \protect\scalebox{0.7}{$\langle qsort(Xs,Ys,Zs), ~~ \{Xs/any,Ys/ng,Zs/g\}), ~~ \{Xs/any,Ys/any,Zs/g\}) \rangle$}\\
(5)  & \protect\scalebox{0.7}{$\langle partition(Xs,X,L,R), ~~ \{Xs/any,X/any,L/ng,R/ng\}), ~~ \{Xs/any,X/any,L/any,R/any\}) \rangle$}\\
(6)  & \protect\scalebox{0.7}{$\langle qsort(Xs,Ys,Zs), ~~ \{Xs/any,Ys/ng,Zs/any\}, ~~ \{Xs/any,Ys/any,Zs/any\} \rangle$}\\
(7)  & \protect\scalebox{0.7}{$\langle qsort(Xs,Ys,[Z|Zs]), ~~ \{Xs/any,Ys/ng,Z/any,Zs/any\}), ~~ \{Xs/any,Ys/any,Z/any,Zs/any\}) \rangle$}\\
(8)  & \protect\scalebox{0.7}{$\langle qsort(Xs,Ys,[Z|Zs]), ~~ \{Xs/any,Ys/ng,Z/g,Zs/any\}, ~~ \{Xs/any,Ys/any,Z/g,Zs/any\} \rangle$}\\
\end{tabular}
\caption{Analysis of \texttt{quicksort/2}.}
  \label{fig:quicksort-abstree-2}
\end{figure}
  
\begin{small}
\begin{tabular}{|c|c|c|c|c|}
\cline{1-5}
\texttt{quicksort/2/0} (entry) & \texttt{quicksort/2/1/1} & \texttt{qsort/3/1/1} & \texttt{qsort/3/1/2} & \texttt{qsort/3/1/3} \\ \cline{1-5}
(1) & (2) & (3), (5) & (4), (6) & (7), (8) \\ \cline{1-5}
\end{tabular}
\end{small}

The resulting single triples $\langle L, \lambda^c, \lambda^s \rangle$
for each program point will be the following: \\

\begin{footnotesize}
\begin{tabular}{|c|c|c|}
\cline{1-3}
\texttt{quicksort/2/0} (entry) & (1) & \protect\scalebox{0.7}{$\langle quicksort(Xs,Ys), ~~ \{Xs/g,Ys/ng\}, ~~ \{Xs/g,Ys/any\} \rangle$} \\ \cline{1-3}
\texttt{quicksort/2/1/1} & (2) & \protect\scalebox{0.7}{$\langle qsort(Xs,Ys,[]), ~~ \{Xs/g,Ys/ng\}, ~~ \{Xs/g,Ys/any\} \rangle$} \\ \cline{1-3}
\texttt{qsort/3/1/1} & (3) '$\sqcup$' (5) & \protect\scalebox{0.7}{$\langle partition(Xs,X,L,R), ~~ \{Xs/any,X/any,L/ng,R/ng\}), ~~ \{Xs/any,X/any,L/any,R/any\}) \rangle$} \\ \cline{1-3}
\texttt{qsort/3/1/2} & (4) '$\sqcup$' (6) & \protect\scalebox{0.7}{$\langle qsort(Xs,Ys,Zs), ~~ \{Xs/any,Ys/ng,Zs/any\}, ~~ \{Xs/any,Ys/any,Zs/any\} \rangle$}\\ \cline{1-3}
\texttt{qsort/3/1/3} & (7) '$\sqcup$' (8) & \protect\scalebox{0.7}{$\langle qsort(Xs,Ys,[Z|Zs]), ~~ \{Xs/any,Ys/ng,Z/any,Zs/any\}, ~~ \{Xs/any,Ys/any,Z/g,Zs/any\} \rangle$} \\ \cline{1-3}
\end{tabular}
\end{footnotesize}

Let us compare the two analyses shown in Figs.
\ref{fig:quicksort-abstree} and \ref{fig:quicksort-abstree-2}. We
already have their representation as one triple
$\langle L, \lambda^c, \lambda^s \rangle$ for each program point. The
distances for each program point, computed as the average of the
distance between its abstract call substitution and the distance
between its abstract success substitution, is the
following: \\
\begin{small}
  \begin{tabular}{|c|c|c|c|c|}
    \cline{1-5}
    \texttt{quicksort/2/0} (entry) & \texttt{quicksort/2/1/1} & \texttt{qsort/3/1/1} & \texttt{qsort/3/1/2} & \texttt{qsort/3/1/3} \\ \cline{1-5}
    0.354 & 0.354 & 0.427 & 0.454 & 0.467 \\ \cline{1-5}
  \end{tabular}
\end{small}

The final distance between the analysis could be the average of all of
them, 0.411. Alternatively, we could assign different weights to each
program point taking into account the structure of the program, and
use a weighted average as final distance. For example, we could assign
the weights of the table below, which would yield the final distance
0.378. \\

\begin{small}
  \begin{tabular}{|c|c|c|c|c|}
    \cline{1-5}
    \texttt{quicksort/2/0} (entry) & \texttt{quicksort/2/1/1} & \texttt{qsort/3/1/1} &  \texttt{qsort/3/1/2} & \texttt{qsort/3/1/3} \\ \cline{1-5} 
    $\frac{1}{2}$ & $\frac{1}{4}$ & $\frac{1}{12}$ & $\frac{1}{12}$ & $\frac{1}{12}$ \\ [1mm] \cline{1-5}
  \end{tabular}
\end{small}

\subsection{Example of the \textit{tree} distance}
\label{tree-dist-ex}

Let us compute the \textit{tree} distance between the two analyses
shown in Figs. \ref{fig:quicksort-abstree} and
\ref{fig:quicksort-abstree-2}. Fig. \ref{fig:qsort-3rd-approach} shows
the tree with distances between both analysis node to node. The
and-nodes are omitted for simplicity. Each or-node is a quintuple
$(P,Id_1,Id_2,D,W)$: $P$ is the predicate corresponding to that
program point, $I_1$ is the identifier of the node in analysis
\ref{fig:quicksort-abstree} corresponding to that or-node, $I_2$ is
the analogous in analysis \ref{fig:quicksort-abstree-2}, $D$ is the
distance between the two nodes, and $W$ is the corresponding weight to
the distance in that node when we apply the definition of the
\textit{tree} distance. We use a factor $\mu=\frac{1}{5}$, and the
average of the distance between the call substitutions and the
distance between the success substitutions as distance between nodes,
using an abstract distance in the underlying \textit{groundness}
domain.

\begin{figure}[t]
  \hspace*{-2cm}
  \includegraphics[scale=0.5]{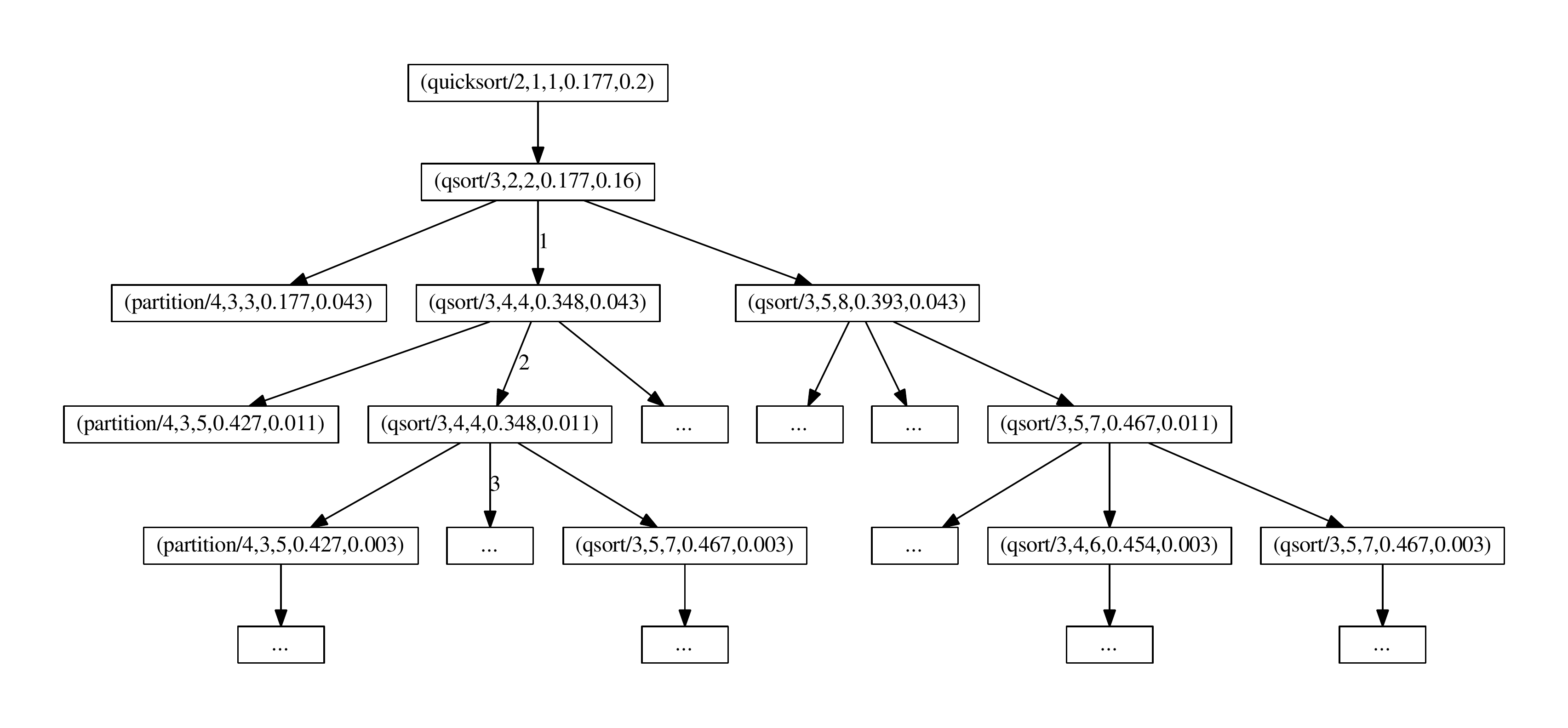}
  \caption{3rd approach: whole abstract execution tree}
  \label{fig:qsort-3rd-approach}
\end{figure}

If we follow the tree through the edges labelled 1,2,3..., we observe
that we are visiting the same node over and over with decreasing
weights
$0.043, 0.011, 0.003 \ldots = w \frac{1}{5} + w \frac{4}{5}
\frac{1}{3} \frac{1}{5} + w \frac{4}{5} \frac{1}{3} \frac{4}{5}
\frac{1}{3} \frac{1}{5} + \ldots$, where
$w=1 \frac {4}{5} \frac{1}{1} \frac{4}{5} \frac{1}{3}$. The sum of
those weights converges
($\frac{1}{5} w \sum_{i=0}^\infty{(\frac{4}{5} \frac{1}{3})^i} =
\frac{1}{5} w \frac{15}{11}$), but it is not trivial to compute in the
general case and for all cases.

However, we can compute the final sum solving the following systems of
equations, where the variable $X_{i,j}$ corresponds to the node $(P,i,j,D,W)$:

$\\
\begin{array}{l}

 \left\{ \begin{array}{ll}
           X_{1,1} = \frac{1}{5}*0.177 + \frac{4}{5}X_{2,2}\\
           X_{2,2} = \frac{1}{5}*0.177 + \frac{4}{5}\frac{1}{3}X_{3,3} + \frac{4}{5}\frac{1}{3}X_{4,4} + \frac{4}{5}\frac{1}{3}X_{5,8} \\
           X_{3,3} = 0.177 \\
           X_{4,4} = \frac{1}{5}*0.348 + \frac{4}{5}\frac{1}{3}X_{3,5} + \frac{4}{5}\frac{1}{3}X_{4,4} + \frac{4}{5}\frac{1}{3}X_{5,7} \\
           X_{5,8} = \frac{1}{5}*0.177 + \frac{4}{5}\frac{1}{3}X_{3,5} + \frac{4}{5}\frac{1}{3}X_{4,6} + \frac{4}{5}\frac{1}{3}X_{5,7} \\
           X_{3,5} = 0.427 \\
           X_{5,7} = \frac{1}{5}*0.177 + \frac{4}{5}\frac{1}{3}X_{3,5} + \frac{4}{5}\frac{1}{3}X_{4,6} + \frac{4}{5}\frac{1}{3}X_{5,7} \\
           X_{4,6} = \frac{1}{5}*0.177 + \frac{4}{5}\frac{1}{3}X_{3,5} + \frac{4}{5}\frac{1}{3}X_{4,6} + \frac{4}{5}\frac{1}{3}X_{5,7} \\
 \end{array} \right.
               
\end{array}
\\
$ 

\end{document}